\newcommand{\id}{\mathbbm{1}}
\providecommand{\U}[1]{\protect\rule{.1in}{.1in}}
\newtheorem{theorem}{Theorem}
\newenvironment{proof}[1][Proof]{\noindent\textbf{#1.} }{\ \rule{0.5em}{0.5em}}
\begin{document}
\preprint{ }
\title[Short title for running header]{Determining lower bounds on a measure of multipartite entanglement from few local observables}
\author{Jun-Yi Wu, Hermann Kampermann, Dagmar Bru{\ss}}
\affiliation{Institut f\"ur Theoretische Physik III, Heinrich-Heine-Universit\"at D\"usseldorf, D-40225 D\"usseldorf, Germany}
\author{Claude Kl\"ockl}
\affiliation{University of Vienna, Faculty of Mathematics, Nordbergstraße 15, 1090 Wien, Austria}
\author{Marcus Huber}
\affiliation{University of Bristol, Department of Mathematics, Bristol, BS8 1TW, U.K.}
\keywords{entanglement, dimensionality, qudits, multipartite systems}
\pacs{03.67.Mn, 03.65.Ud, 03.65.Fd, 03.65.Aa}

\begin{abstract}
We introduce a method to lower bound an entropy-based measure of genuine multipartite entanglement via nonlinear entanglement witnesses. We show that some of these bounds are tight and explicitly work out their connection to a framework of nonlinear witnesses that were published recently. Furthermore, we provide a detailed analysis of these lower bounds in the context of other possible bounds and measures. In exemplary cases, we show that only a few local measurements are necessary to determine these lower bounds.
\end{abstract}
\volumeyear{year}
\volumenumber{number}
\issuenumber{number}
\eid{identifier}
\date[Date text]{date}
\received[Received text]{date}

\revised[Revised text]{date}

\accepted[Accepted text]{date}

\published[Published text]{date}

\maketitle

%\tableofcontents

\section{Introduction}

Quantum entanglement is central to the field
of quantum information theory. Due to its numerous applications in upcoming
quantum technology much research has been devoted to its understanding
 (for a recent overview consider Ref.~\cite{horodeckiqe}).\newline
Especially in systems comprised of many particles entanglement provides
numerous challenges and of course potential applications, such as building
quantum computers (see Ref.~\cite{qc}), performing quantum algorithms (the
connection to multipartite entanglement is demonstrated in Ref.~\cite{qa}) and
multi-party cryptography (see e.g. Ref.~\cite{SHH3}).\newline Furthermore, the
understanding of the behavior of complex systems seems to be closely linked to
the understanding of multipartite entanglement manifestations, demonstrated by
the connection to phase transitions and ionization in condensed matter systems
(e.g. \cite{cond}), the properties of ground states in relation to
entanglement (as shown e.g. in Ref.~\cite{spin,ground}), or potentially even
biological systems (such as e.g. bird navigation \cite{bird}).\newline In
order to judge the relevance of entanglement in such systems it is crucial to
not only detect its presence, but also quantify the amount. The structure of entangled states, especially in multipartite systems \cite{acin}, is very complex and
the question whether a given state is entangled is even NP-hard \cite{gurvits}. Thus, in general, it will
not be possible to derive a computable measure of entanglement that reveals all entangled states to be entangled and discriminates between different entanglement classes. Furthermore, %in large and complex systems
full information about the state of the system requires a number of measurements that grows exponentially in the
size of the system. For the detection of
entanglement in multipartite systems most researchers have therefore made it a
primary goal to develop entanglement witnesses, which via a limited amount of
local measurements can detect the presence of entanglement, even in complex
systems (for an overview of multipartite entanglement witnesses consider
Ref.~\cite{guehnetoth}).\newline The expectation value  of witness-operators
are usually expressed in terms of inequalities, which if violated show the presence of entanglement.
Nonlinear witnesses (first introduced in Ref.~\cite{horodeckinonlinear} see also early discussions in e.g. Ref.~\cite{nonlin}) provide a generalization that is no longer a linear function
of density matrix elements, but a nonlinear one. Thus one cannot reformulate the criteria in terms of an expectation value of a hermitian operator (unless one considers coherent measurements on multiple copies of the state, which out of experimental infeasibility we do not discuss in our manuscript). We will henceforth refer to inequalities that involve nonlinear functions of density matrix elements as nonlinear entanglement witnesses.\\
Recently some authors pointed out a connection between the possible amount of violation of these nonlinear inequalities
and quantification of entanglement in multipartite systems (in
Ref.~\cite{Guehnetaming} and Ref.~\cite{maetal}).\newline The aims of this
paper are twofold. First to systematically show the connection of numerous witnesses
to a meaningful measure of genuine multipartite entanglement and second to use this
established relation for the development of novel witnesses, which by
construction give lower bounds on that measure. To that end we follow and
generalize the approach from Ref.~\cite{maetal}.\\
It turns out that only a small number of density matrix elements enters into our lower bounds, making the construction experimentally feasible even in larger systems of high dimensionality.

\section{A measure of multipartite entanglement and its lower bounds}

%\subsection{Genuine Multipartite Entanglement (GME) Measure $E_{m}%
%$\label{Sec. Note & Def. GME Measure}}

\subsection{A measure of genuine multipartite entanglement (GME)}

%A genuine multipartite entanglement measure (GME measure) $E_{m}%
%$\cite{Ma&Chen&Chen@2011-GME_MEASURE} is introduced by Z.Ma, Z.Chen, J.Chen
%and etc., which is constructed via a convex roof over all possible
%decomposition%
The entropy of subsystems has often been used, in order to quantify
entanglement contained in multipartite pure states (e.g. see
\cite{horodeckiqe,Milburn,Love,HH2,HHK1}). In this paper we will follow the
definition first presented in Ref.\cite{Milburn} and define a measure of GME
for multipartite pure states as
\begin{align}
E_{m}(|\psi\rangle\langle\psi|):=\min_{\gamma}\sqrt{S_{L}\left(  \rho
_{\gamma}\right)  }=\min_{\gamma}\sqrt{2\left(  1-\text{Tr}(\rho_{\gamma}%
^{2})\right)  }\,, \label{Def. GME measure}%
\end{align}
where
$S_{L}\left(  \rho_{\gamma}\right)  $ is the linear entropy of the
reduced density matrix of subsystem $\gamma$, i.e. $\rho_{\gamma}%
:=\text{Tr}_{\bar{\gamma}}(|\psi\rangle\langle\psi|)$. The minimum is taken over all possible reductions $\gamma$ (where the complement is denoted as $\bar{\gamma}$), which corresponds to a bipartite split into $\gamma|\bar{\gamma}$.
\newline As any proper
measure of multipartite entanglement for pure states can be generalized to
mixed states via a convex roof, i.e.
\[\label{definitionroof}
E_{m}(\rho):=\inf_{\{p_{i},|\psi_{i}\rangle\}}\sum_{i}p_{i}E_{m}(|\psi
_{i}\rangle\langle\psi_{i}|)\,.
\]
Due to its construction this measure fulfills almost all desirable properties
one would expect from measures of GME (see Ref.~\cite{maetal} for details).
Because computing all possible pure state decompositions of a density matrix is
computationally impossible even if one is given the complete density matrix,
we require lower bounds to be calculable for this expression.\\
Also note that a lower bound on the linear entropy directly leads to a lower bound on the R{\'e}nyi $2$-entropy $S_R^{(2)}(\rho_\gamma)$ via the relation
$S_R^{(2)}(\rho_\gamma)=-\log_2(\frac{2-S_L(\rho_\gamma)}{2})$, which also provides one of the physical interpretations
of this measure. The R{\'e}nyi $2$-entropy in itself is a lower bound to the von Neumann entropy $S(\rho_\gamma)$ and the mutual information can be expressed as $I_{\gamma\bar{\gamma}}:=S(\rho_\gamma)+S(\rho_{\bar{\gamma}})-S(\rho)=2S(\rho_\gamma)$. Thus by our lower bound we gain a lower bound on the average minimal mutual information across all bipartitions of the pure states in the decomposition, minimized over all decompositions.

\subsection{Linear entropy and its convex roof}

The state vector of an $n$-partite qudit state can be expanded in terms of the
computational basis
\[
|\psi\rangle=\sum_{i_{1},i_{2},\cdots,i_{n}=0}^{d-1}c_{i_{1},i_{2}%
,\cdots,i_{n}}|i_{1},i_{2},\cdots,i_{n}\rangle=:\sum_{\eta\in\mathbb{N}%
_{d}^{\otimes n}}c_{\eta}|\eta\rangle\,,
\]
where a basis vector is denoted by $\eta=\left(  i_{1},i_{2},\cdots,i_{n}\right)  \in\mathbb{N}%
_{d}^{\otimes n}$.% We denote the computational number of $\eta$ as $\#\left(
%\eta\right)  :=\sum_{k}i_{k}d^{\left(  n-i_{k}\right)  }$.
 This vector notation will facilitate the upcoming derivations. A crucial element of the
notation in this paper will be the permutation operator acting upon two vectors,
exchanging vector components corresponding to the set of indices. E.g. the
permutation operator $P_{\{1,3\}}(\eta_1,\eta_2)$ will exchange the
first and third component of the vector $\eta_{1}$
with the corresponding component of the vector
$\eta_{2}$, i.e.
\[
P_{\left\{  \textcolor{red}{1},\textcolor{blue}{3}\right\}  }%
(\mathbf{\textcolor{red}{0}}1\mathbf{\textcolor{blue}{2}}%
13,\mathbf{\textcolor{red}{3}}0\mathbf{\textcolor{blue}{1}}%
21)=(\mathbf{\textcolor{red}{3}}1\mathbf{\textcolor{blue}{1}}%
13,\mathbf{\textcolor{red}{0}}0\mathbf{\textcolor{blue}{2}}21).
\]
Using this notation one can write down a very simple expression for the linear
entropy of a reduced state $\rho_{\gamma}$ (derivation see
section \ref{linear entropy of pure state} in the appendix)
\begin{equation}
S_{L}\left(  \rho_{\gamma}\right)  =\sum_{\eta_{1}\not =\eta_{2}%
}\left\vert c_{\eta_{1}}c_{\eta_{2}}-c_{\eta_{1}^{\gamma}}c_{\eta_{2}^{\gamma
}}\right\vert ^{2}, \label{result. linear entropy}%
\end{equation}
where $\left(  \eta_{1}^{\gamma},\eta_{2}^{\gamma}\right)  =P_{\gamma}\left(
\eta_{1},\eta_{2}\right)  $. %Let a state $\rho$ be decomposable into pure states
%$\left\vert \psi_{i}\right\rangle =\sum_{\eta}c_{\eta}^{i}\left\vert
%\eta\right\rangle $, then we have the explicit form of the GME measure%
%\begin{equation}
%E_{m}(\rho)=\inf_{\{p_{i},|\psi_{i}\rangle\}}\sum_{i}p_{i}\min_{\gamma}%
%\sqrt{\sum_{\eta_{1}\not =\eta_{2},\in\mathbb{N}_{d}^{\otimes n}}\left\vert
%c_{\eta_{1}}^{i}c_{\eta_{2}}^{i}-c_{\eta_{1}^{\gamma}}^{i}c_{\eta_{2}^{\gamma
%}}^{i}\right\vert ^{2}}\,. \label{result. explicit GME measure}%
%\end{equation}
For pure states we can of course find lower bounds on $E_m(|\psi\rangle\langle\psi|)$ by lower bounding the linear entropy for all possible bipartitions. For mixed states we can then provide a lower bound for the convex roof $E_m(\rho)$. We now illustrate our method in one exemplary case and then continue to articulate the main theorem.\\
Note that the linear entropy of subsystems has been widely used for lower bounding measures of entanglement due to the well known and simple structure of eq.(\ref{result. linear entropy}). None of the previous methods, however, work for lower bounding the inherently multipartite measure $E_m(\rho)$, due to the additional minimization over all bipartitions in each decomposition element of the convex roof.

\subsection{W-states\label{sec. W-state}}
In order to demonstrate how our framework works let us start by deriving the explicit lower bound detecting the three-qubit $W$ state $|W\rangle=\frac{1}{\sqrt{3}}(|001\rangle+|010\rangle+|100\rangle)$. For three-qubit states there are three bipartitions ($1|23,2|13,3|12$) and thus we have three linear entropies to look at in order to calculate $E_m(|\psi\rangle\langle\psi|)$,
\begin{gather}
\sqrt{S_L(\rho_1)}=\\2\sqrt{|c_{001}c_{100}-c_{101}c_{000}|^2+|c_{010}c_{100}-c_{110}c_{000}|^2+(\cdots)}\,,\nonumber\\
\sqrt{S_L(\rho_2)}=\\2\sqrt{|c_{010}c_{100}-c_{110}c_{000}|^2+|c_{010}c_{001}-c_{011}c_{000}|^2+(\cdots)}\,,\nonumber\\
\sqrt{S_L(\rho_3)}=\\2\sqrt{|c_{001}c_{100}-c_{101}c_{000}|^2+|c_{010}c_{001}-c_{011}c_{000}|^2+(\cdots)}\nonumber\,.
\end{gather}
Now using $\sqrt{a^2+b^2}\geq\frac{1}{\sqrt{2}}(a+b)$ (which is a specific case of the inequality \ref{eq. ineq. sqrt} in appendix \ref{appx. ineq.sqrt}) and $|a-b|\geq|a|-|b|$ it is obvious that
\begin{align}
\sqrt{S_L(\rho_1)}\geq\frac{2(|c_{001}c_{100}|-|c_{101}c_{000}|+|c_{010}c_{100}|-|c_{110}c_{000}|)}{\sqrt{2}}\,,\\
\sqrt{S_L(\rho_2)}\geq\frac{2(|c_{010}c_{100}|-|c_{110}c_{000}|+|c_{010}c_{001}|-|c_{011}c_{000}|)}{\sqrt{2}}\,,\\
\sqrt{S_l(\rho_3)}\geq\frac{2(|c_{001}c_{100}|-|c_{101}c_{000}|+|c_{010}c_{001}|-|c_{011}c_{000}|)}{\sqrt{2}}\,.
\end{align}
Then using $|ab|-\frac{1}{2}(a^2+b^2)\leq 0$ we can add one negative term for each entropy and it will still be a lower bound, i.e. we add $|c_{010}c_{001}|-\frac{1}{2}(|c_{010}|^2+|c_{001}|^2)$ in the first lower bound, $|c_{100}c_{001}|-\frac{1}{2}(|c_{100}|^2+|c_{001}|^2)$ in the second and $|c_{010}c_{100}|-\frac{1}{2}(|c_{010}|^2+|c_{100})|^2$ in the third. Then we can use that $\min[P-N_1,P-N_2,P-N_3]\geq P-N_1-N_2-N_3$ and end up with
\begin{align}
E_m(|\psi\rangle\langle\psi|)\geq\sqrt{2}(|c_{001}c_{100}|+|c_{001}c_{010}|+|c_{100}c_{010}|)-\nonumber\\
\frac{\sqrt{2}}{2}(|c_{010}|^2+|c_{100}|^2+|c_{001}|^2)-\nonumber\\
\sqrt{2}(|c_{101}c_{000}|+|c_{110}c_{000}|+|c_{011}c_{000}|)\,.
\end{align}
Finally we can bound the convex roof using the following two relations
\begin{align}
\inf_{\{p_i,|\psi_i\rangle\}}\sum_ip_i|c^i_{\eta_1} c^i_{\eta_2}|&\geq|\langle\eta_1|\rho|\eta_2\rangle|\,,\\
\inf_{\{p_i,|\psi_i\rangle\}}\sum_ip_i|c^i_{\eta_1} c^i_{\eta_2}|&\leq\sqrt{\langle\eta_1|\rho|\eta_1\rangle\langle\eta_2|\rho|\eta_2\rangle}\,,
\end{align}
and end up with a lower bound for mixed states as
\newpage
\begin{gather}
E_m(\rho)\geq\sqrt{2}(|\langle 001|\rho|100\rangle|+|\langle 001|\rho|010\rangle|+|\langle 100|\rho|010\rangle|)-\nonumber\\
\frac{\sqrt{2}}{2}(\langle 010|\rho|010\rangle+\langle 100|\rho|100\rangle+\langle 001|\rho|001\rangle)-\nonumber\\
\sqrt{2}\sqrt{\langle 101|\rho|101\rangle\langle 000|\rho|000\rangle}-\nonumber\\
\sqrt{2}\sqrt{\langle 110|\rho|110\rangle\langle 000|\rho|000\rangle}-\nonumber\\
\sqrt{2}\sqrt{\langle 011|\rho|011\rangle\langle 000|\rho|000\rangle}\,.
\label{WWitness}
\end{gather}
Surprisingly this leads directly to the nonlinear entanglement witness inequality presented in Refs.~\cite{Guehnewit,hmgh1} up to a factor of $\sqrt{2}$. Using only simple algebraic relations we have thus shown how to lower bound the convex roof construction. The first apparent strength of this lower bound is the limited number of density matrix elements needed to compute it. E.g. in our exemplary three-qubit case only ten out of possibly sixty-four elements need to be measured. Obviously we can extend the analysis using the same techniques to systems beyond three qubits.

\section{A General Construction of lower bounds on the GME measure $E_{m}$}
Now we can generalize the connection of the 3-qubit W state witness and the measure $E_{m}$.
%First we provide a general theorem how such lower bounds (and thus of course witnesses) can be obtained
%for any given pure state.
Just as for three qubits we can always get lower bounds by summing the
coefficient pairs $c_{\eta_1}c_{\eta_2}$ that belong to a certain target pure state and appear in some or all reduced linear entropies. The construction of such general lower bounds also starts by selecting a subset of coefficient pairs that will be translated into off-diagonal elements $\rho_{\eta_1,\eta_2}$, where $(\eta_1,\eta_2)$ is the vector basis pair denoting the row and column of the element in density matrix $\rho$. We denote the selected vector basis pairs as $R:=\{(\eta_1,\eta_2)\}$. Then we can repeat the steps analogously to eq.(6-11) and arrive at a general lower bound on the measure as the following theorem:
%We can even write down this construction in a very general form in the following theorem:
\newpage
\begin{widetext}

\begin{theorem}
[A general lower bound on the GME measure]%
\label{theorem. lower bound of GME measure}For a set of row-column pairs $R=\{(\eta_1,\eta_2)\}$,
the genuine multipartite entanglement measure $E_{m}$ has the following
lower bound:
\begin{equation}
E_{m}\geq 2\sqrt{\frac{1}{\left\vert R\right\vert -N_{R}}}\left[  \sum_{\left(
\eta_{1},\eta_{2}\right)  \in R}\left(  \left\vert \rho_{\eta_{1}\eta_{2}%
}\right\vert -\sum_{\gamma\in\Gamma(\eta_{1},\eta_{2})}\sqrt{\rho_{\eta
_{1}^{\gamma}\eta_{1}^{\gamma}}\rho_{\eta_{2}^{\gamma}\eta_{2}^{\gamma}}%
}\right)  -\left(  \frac{1}{2}\sum_{\eta\in I(R)}N_{\eta}\left\vert \rho
_{\eta\eta}\right\vert \right)  \right] \label{eq. theorem-result}\,.
\end{equation}
The right-hand-side of eq.(\ref{eq. theorem-result}) defines a GME witness $W_R(\rho)$,
where $\rho_{\eta_{1},\eta_{2}}:=\langle\eta_{1}|\rho|\eta_{2}\rangle$,
$(\eta_{1}^{\gamma},\eta_{2}^{\gamma}):=P_{\gamma}(\eta_{1},\eta_{2})$,
$\Gamma(\eta_{1},\eta_{2}):=\left\{  \gamma:(\eta_{1}^{\gamma},\eta
_{2}^{\gamma})\notin R\right\}  $ and $I(R):=\left\{  \eta:\exists\eta
^{\prime}\text{ that }(\eta^{\prime},\eta)\text{ or }(\eta,\eta^{\prime})\in
R\right\}  $ is the set of basis vectors $\eta$, which appear in the
set $R$.

$N_{R}$ is the maximal (or minimal) value of $|R^{\gamma}|$ over all possible bipartitions $\gamma|\bar{\gamma}$,
where $R^{\gamma}$ is the set of coefficient pairs $(c_{\eta_{1}},c_{\eta_{2}})\in R$, which do
not contribute to the $\gamma$-subsystem entropy.

$N_{\eta}$ are normalization constants given
by the maximal value of $n_{\eta}^{\gamma}$ over all possible bipartitions $\gamma|\bar{\gamma}$, where $n_{\eta}^{\gamma}$ is the number of coefficients $c_\eta$ from some pairs in $R$, which are not counted in the $\gamma$-subsystem entropy (and how many are counted depends on whether one chooses $N_R$ to be maximal or minimal). \\
\end{theorem}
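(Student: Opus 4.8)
The plan is to mirror the three-qubit $W$-state derivation of eqs.~(6--11), but to carry out every step for an arbitrary selected set $R$ and an arbitrary bipartition, and then to reassemble the pieces. I would first fix a pure state $|\psi\rangle$ and a single bipartition $\gamma|\bar{\gamma}$, and isolate from $S_L(\rho_\gamma)$ in eq.~(\ref{result. linear entropy}) only those summands whose index pair lies in $R$. A selected pair $(\eta_1,\eta_2)$ fails to contribute to $S_L(\rho_\gamma)$ exactly when $P_\gamma$ fixes it, i.e. when $(\eta_1^\gamma,\eta_2^\gamma)=(\eta_1,\eta_2)$; collecting these gives $R^\gamma$, so the number of genuinely contributing terms is $|R|-|R^\gamma|\ge|R|-N_R$ when $N_R$ is taken maximal. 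Writing $\sqrt{S_L(\rho_\gamma)}=2\sqrt{\sum|c_{\eta_1}c_{\eta_2}-c_{\eta_1^\gamma}c_{\eta_2^\gamma}|^2}$ and applying the square-root inequality (\ref{eq. ineq. sqrt}) of appendix~\ref{appx. ineq.sqrt} with this count then produces the prefactor $2\sqrt{1/(|R|-N_R)}$ and turns the square root of a sum into a sum of absolute values $\sum|c_{\eta_1}c_{\eta_2}-c_{\eta_1^\gamma}c_{\eta_2^\gamma}|$.

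Next I would apply the reverse triangle inequality $|a-b|\ge|a|-|b|$ to each summand, splitting it into a positive part $|c_{\eta_1}c_{\eta_2}|$ and a subtracted permuted part $|c_{\eta_1^\gamma}c_{\eta_2^\gamma}|$. The key observation is that the subtracted term only needs to be carried when the permuted pair leaves $R$: if $(\eta_1^\gamma,\eta_2^\gamma)\in R$ it is already booked as a positive contribution elsewhere and may be dropped. This is precisely what the index set $\Gamma(\eta_1,\eta_2)=\{\gamma:(\eta_1^\gamma,\eta_2^\gamma)\notin R\}$ encodes, and summing over it yields the $\sum_{\gamma\in\Gamma(\eta_1,\eta_2)}\sqrt{\cdots}$ structure that survives into the final expression after the convex-roof step.

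The bound obtained so far is $\gamma$-dependent, so the remaining task is to replace $\min_\gamma$ by a single expression valid for every bipartition. Following the $W$-state trick I would add, for each $\gamma$, non-positive filler terms of the form $|c_{\eta_a}c_{\eta_b}|-\tfrac12(|c_{\eta_a}|^2+|c_{\eta_b}|^2)\le0$, chosen so that the positive off-diagonal content becomes common to all $\gamma$; then $\min_\gamma[P-N^{(\gamma)}]\ge P-\sum_\gamma N^{(\gamma)}$ collapses the minimum into one lower bound. The accumulated diagonal penalties collect into $\tfrac12\sum_{\eta\in I(R)}N_\eta|\rho_{\eta\eta}|$, where $N_\eta$ is the maximal number of fillers ever required for the basis vector $\eta$. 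Finally I would pass to mixed states through the convex roof: the positive off-diagonals are bounded below by $\inf\sum_ip_i|c^i_{\eta_1}c^i_{\eta_2}|\ge|\rho_{\eta_1\eta_2}|$, the subtracted terms are bounded above by $\inf\sum_ip_i|c^i_{\eta_1}c^i_{\eta_2}|\le\sqrt{\rho_{\eta_1\eta_1}\rho_{\eta_2\eta_2}}$, and the diagonal fillers satisfy $\sum_ip_i|c^i_\eta|^2=\rho_{\eta\eta}$; since positive coefficients multiply quantities bounded from below and negative coefficients multiply quantities bounded from above, the infimum over decompositions preserves the inequality and reproduces eq.~(\ref{eq. theorem-result}).

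I expect the main obstacle to be the combinatorial bookkeeping of the third step: one must verify that a single choice of filler terms simultaneously equalizes the positive content across all bipartitions, and that the resulting multiplicities are exactly the $N_R$ (maximal, or minimal, value of $|R^\gamma|$) and $N_\eta$ (maximal value of $n_\eta^\gamma$) defined in the statement. Keeping the maximum/minimum conventions mutually consistent — so that the prefactor count and the diagonal penalty count refer to the same worst-case bipartition, and so that no contributing pair is accidentally double-counted between its positive and subtracted roles — is the delicate point, whereas the square-root, triangle, and convex-roof inequalities themselves are routine.
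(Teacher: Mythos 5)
Your proposal follows the same architecture as the paper's appendix proof (restrict the entropy sum of eq.~(\ref{result. linear entropy}) to $R$, apply the Cauchy--Schwarz-type inequality (\ref{eq. ineq. sqrt}), split with the reverse triangle inequality, add diagonal filler terms to remove the bipartition dependence, then pass to the convex roof), and your final convex-roof step is handled correctly. However, both steps in which the theorem's constants $N_R$ and $N_\eta$ get fixed are argued with inequalities pointing the wrong way.

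First, the prefactor. Inequality (\ref{eq. ineq. sqrt}) gives $\sqrt{\sum_{i\in I}|a_i|^2}\ge |I|^{-1/2}\sum_{i\in I}|a_i|$, so to replace the $\gamma$-dependent factor $1/\sqrt{|R|-|R^\gamma|}$ by the uniform $1/\sqrt{|R|-N_R}$ you need an \emph{upper} bound on the number of contributing terms, $|R|-|R^\gamma|\le |R|-N_R$ for every $\gamma$, i.e. $N_R\le |R^\gamma|$, which is precisely the choice $N_R=\min_\gamma|R^\gamma|$. You instead invoke $|R|-|R^\gamma|\ge |R|-N_R$ with $N_R$ maximal; but then $1/\sqrt{|R|-|R^\gamma|}\le 1/\sqrt{|R|-N_R}$, so the claimed prefactor does not follow from keeping all terms. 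The paper's proof runs with the minimal choice; the maximal choice requires the extra move described in the paper's closing paragraph --- first \emph{discard} summands so that only $|R|-\max_\gamma|R^\gamma|$ elements of $R\backslash R^\gamma$ remain, and only then apply Cauchy--Schwarz (which also changes which fillers, and hence which $N_\eta$, appear).

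Second, the diagonal penalty. Your collapse $\min_\gamma[P-N^{(\gamma)}]\ge P-\sum_\gamma N^{(\gamma)}$ accumulates the fillers of \emph{all} bipartitions, so it produces a diagonal coefficient $\sum_\gamma n_\eta^\gamma$, not the theorem's $N_\eta=\max_\gamma n_\eta^\gamma$; the two differ whenever $\eta$ needs fillers in more than one bipartition (already for the three-qubit $W$ state each of $|c_{001}|^2,|c_{010}|^2,|c_{100}|^2$ would acquire coefficient $2$ instead of $1$). What is actually needed --- and what the paper's appendix does, its own loose ``$\min[P-N_1,P-N_2,P-N_3]\geq P-N_1-N_2-N_3$'' phrasing in the $W$-state section notwithstanding --- is a single common lower bound valid for every $\gamma$: for each bipartition one subtracts the union over $\Gamma(\eta_1,\eta_2)$ of permuted terms and bounds the per-bipartition filler count by $n_\eta^{\gamma}\le N_\eta$, so that one and the same expression $W_R$ bounds $\sqrt{S_L(\rho_\gamma)}$ for each $\gamma$, and the minimum over $\gamma$ inherits it. As written, your argument establishes only the weaker bound with $\sum_\gamma n_\eta^\gamma$ in place of $N_\eta$, which is not eq.~(\ref{eq. theorem-result}).
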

\begin{proof}
See Appendix \ref{appx. proof of lower bound on GME measure} for the full proof.
\end{proof}
\end{widetext}
It is evident that not every choice of coefficient pairs will yield a useful lower bound, because one really needs to select those that are actually contributing to multipartite entanglement. There is however always an obvious choice. The set of coefficient pairs $R$ must be chosen such that in every subsystem at least one of the elements of $R$ contribute to the linear entropy of the reduced state. E.g. in the case of GHZ states given in a specific basis $|GHZ\rangle=\frac{1}{\sqrt{2}}(|0\rangle^{\otimes n}+|1\rangle^{\otimes n})$ one would choose the pair $({00\cdots0},{11\cdots1})$, which contributes to all reduced entropies. In the general case however there is still some freedom of choice left to get a valid lower bound. For some sets $R$ it can happen, that the coefficients do not contribute to every subsystem entropy equally (which we show in an exemplary case in section \ref{sec. four-qubit-example}). Then one can choose $N_R$ in different ways, but in all considered cases we found that choosing it maximal or minimal will produce the best bounds (where choosing it maximal usually yields the tightest bounds close to pure states, whereas choosing it minimal improves the noise resistance). Since these coefficients are in general basis dependent, so is also our witness construction. The prefactor $\sqrt{\frac{1}{\left\vert R\right\vert-N_R }}$ suggests that the optimal basis for constructing such a lower bound is given by the minimal tensor rank representation of the pure state.
\section{Applications and Examples}
\subsection{Four-qubit singlet state}\label{sec. four-qubit-example}
Let us illustrate how to apply Theorem \ref{theorem. lower bound of GME measure} with an explicit example.
In an experimental setting where one expects to produce a four-qubit singlet state
(which was e.g. discussed in the context of solving the liar detection problem in Ref.~\cite{cabello}), i.e.
\begin{align}
|S_4\rangle=&\frac{1}{2\sqrt{3}}(2|0011\rangle+2|1100\rangle-|0110\rangle\nonumber\\
            &-|1001\rangle-|1010\rangle-|0101\rangle)\,,
\end{align}
one is confronted with the following expected coefficients: $c_{0011},c_{1100},c_{0101},c_{1010},c_{0110},c_{1001}$. Following the recipe of theorem \ref{theorem. lower bound of GME measure} we now select some coefficient pairs. We could choose e.g. $R_1=(0011,0101)$, $R_2=(0011,1010)$, $R_3=(0011,0110)$ and $R_4=(0011,1001)$, such that $R=\{R_1,R_2,R_3,R_4\}$. For this selection we use theorem \ref{theorem. lower bound of GME measure} to bound the GME measure. We see that in every subsystem at least two of these pairs appear naturally. Although there are more coefficient pairs we now choose to only take into account two per subsystem entropy and thus choose $N_R$ to be the minimal number of coefficient pairs in every subsystem which gives $N_R=2$. Thus we need to add negative terms that compensate for the missing terms just as we did in the three-qubit case, but now we need to do it two times in every subsystem. This results in the following individual prefactors $N_\eta$ for the diagonal elements: $N_{0011}=2$ (as this coefficient appears in two missing pairs in every subsystem), $N_{0101}=1$, $N_{1001}=1$, $N_{1010}=1$ and $N_{0110}=1$ (as those appear maximally once per subsystem entropy). Inserting this in theorem \ref{theorem. lower bound of GME measure} we end up with the lower bound as
\begin{gather}
E_m(\rho)\geq\frac{2}{\sqrt{2}}(|\rho_{R_1}|+|\rho_{R_2}|+|\rho_{R_3}|+|\rho_{R_4}|\nonumber\\
-\sqrt{\rho_{0111,0111}\rho_{0001,0001}}-\sqrt{\rho_{0111,0111}\rho_{0010,0010}}\nonumber\\
-\sqrt{\rho_{1011,1011}\rho_{0001,0001}}-\sqrt{\rho_{1011,1011}\rho_{0010,0010}}\nonumber\\
-\frac{1}{2}(\rho_{0101,0101}+\rho_{1001,1001}+\rho_{1010,1010}+\rho_{0110,0110})\nonumber\\
-\rho_{0011,0011})\,.
\end{gather}
We have thus created a nonlinear witness function that lower bounds our measure. From an experimental point of view this is very favorable as few local measurement settings suffice to ascertain the needed thirteen density matrix elements (especially since the nine diagonal elements can be constructed from a single measurement setting). Of course we could also exploit the connection of our lower bound to the Dicke state witness $Q^{(2)}_2$ (which is discussed in section \ref{Dicke}), which also detects GME in this state (although at the cost of more required measurements). In this case even the resistance to white noise is more favorable with our construction method, as for a state $\rho=p|S_4\rangle\langle S_4|+\frac{1-p}{16}\mathbbm{1}$ this exemplary lower bound detects GME until $p=\frac{21}{29}\approx 0.72$, whereas the old witness construction yields a worse resistance up to $p=\frac{27}{35}\approx 0.77$. This shows the versatility of our general approach. By choosing certain coefficients one can tailor these lower bounds to specific experimental situations. If one is confronted with a low noise system it is always beneficial to choose as few coefficients as possible, such that very few local measurements suffice (even a number that is linear in the size of the system is often sufficient). Every additional measurement can then be included in the lower bound and improves the bound and its noise resistance if necessary.

\subsection{Bipartite witnesses and lower bounds on the measure}
Although we have presented our theorem and measures in the general case of $n$-qudits, we can always apply the lower bounds also for $n=2$, as our theorem holds for any $n$ and $d$.	Suppose we are given a bipartite qutrit system and want to lower bound the concurrence with only a few local measurements.
	If the expected state is e.g. $|\psi\rangle=\frac{1}{\sqrt{3}}(|00\rangle+|11\rangle+|22\rangle)$ we can use the lower bounding procedure outlined above, yielding
\begin{align}
	E_m(\rho)\geq\frac{2}{\sqrt{3}}(\Re e[\left.\left\langle 00| \rho |11 \right\rangle\right.]-\sqrt{\left.\left\langle 01| \rho| 01 \right\rangle\left\langle| 10 \rho |10 \right\rangle\right.}+\nonumber\\
	\Re e[\left.\left\langle 00| \rho |22 \right\rangle\right.]-\sqrt{\left.\left\langle 02| \rho |02 \right\rangle\left\langle 20| \rho| 20 \right\rangle\right.}+\nonumber\\
	\Re e[\left.\left\langle 11| \rho |22 \right\rangle\right.]-\sqrt{\left.\left\langle 12| \rho |12\right\rangle\left\langle 21| \rho |21 \right\rangle\right.	})\,.
	\end{align}
	
	In order to determine the lower bound we have to measure nine different density matrix elements. Of course any density matrix element can always be obtained via local measurements. How these measurements can be performed in a basis consisting of a tensor product of the generalized Gell-Mann matrices we show explicitly in appendix \ref{GellMann}.\\
It turns out that these nine different density matrix elements can be obtained via ten local measurement settings. Let us study the lower bound in the presence of noise. Suppose we have white noise in the system, i.e. $\rho=p|\psi\rangle\langle\psi|+\frac{1-p}{d}\id$. Calculating the lower bound results in $E_m(\rho)\geq\frac{2(4p-1)}{\sqrt{27}}$, which is equivalent to the analytical expression of Wootter's concurrence for these systems (as proven in Ref.~\cite{hashemi, caves}). In this case we have a necessary and sufficient entanglement criterion and a tight lower bound on the concurrence from ten local measurements for a special class of states. Indeed if one generalizes this example to arbitrary dimension $d$, we find that the bound is always tight for bipartite isotropic states.

\subsection{Dicke States}\label{Dicke}
We will now continue to show how this construction relates to an entanglement witness for Dicke-state, which are multi-dimensional generalizations of the $W$ states(which were first introduced in the context of laser emission in Ref.~\cite{Dicke}).

In the original article \cite{maetal}, where this approach was first
introduced, the authors connected the violation of a witness suitable for GHZ
states (first introduced in Ref.~\cite{Guehnewit} and later presented in a
more general framework in Ref.~\cite{hmgh1}) with a lower bound on the measure
$E_{m}$. We want to follow this approach and establish a general connection
between a set of witnesses suitable for all generalized Dicke states
introduced in Ref.~\cite{hesgh1} and generalized in Ref.~\cite{shgh1}. To that
end let us first introduce a concise notation for those states.\newline Let
$\alpha$ be a set containing specific subsystems of a multipartite state. We
then define the state $|\alpha^{l}\rangle$ as a tensor product of states
$|l\rangle$ for all subsystems not contained in $\alpha$ and excited states $|l+1\rangle$
in the subsystems contained in $\alpha$. E.g. for the four-partite state
$|\{1,3\}^{2}\rangle$ we have $|3232\rangle$. Using this abbreviated notation
we can define a generalized set of  Dicke states, consisting of $n$ $d$-dimensional subsystems,
as
\begin{align}
|D_{m}^{d}\rangle=\frac{1}{\sqrt{{n\choose m}(d-1)}}\sum_{l=0}^{d-2}\sum_{\alpha
:|\alpha|=m}|\alpha^{l}\rangle\,,
\end{align}
where the parameter $m$ denotes the number of excitations, with $0<m<n$.

Since the explicit form of the nonlinear witness from Ref.~\cite{shgh1} will be used in the following
considerations we will repeat it in appendix \ref{DickeWitness}.
For all biseparable states this witness $Q_{m}^{\left(  d\right)  }$ is
strictly smaller equal zero, i.e.
\begin{align*}
Q\left(  \rho\right)   &  \leq0\Leftarrow\rho\text{ is biseparable}\\
Q\left(  \rho\right)   &  >0\Rightarrow\rho\text{ is multipartite
entangled}\,.
\end{align*}

Furthermore, the witness can also detect the ``dimensionality'' of GME, by which we mean
the maximal number of degrees of freedom $f_\rho (f_\rho \le d)$
that occurs in the  pure states of an ensemble constituting $\rho$, minimised over all ensembles
(this is the natural generalization of the concept of Schmidt number
\cite{schmidtnumber}
to multipartite systems, further explored e.g. in Ref.\cite{shgh1}). I.e. the dimensionality is defined as
\begin{align}
f_\rho:=\inf_{\{p_i,|\psi_i\rangle\}}\max_i(\min_\gamma(\text{rank}(\rho_\gamma)))
\end{align}
Since
\begin{align*}
Q_{m}^{\left(  d\right)  }\left(  \rho\right)   &  \leq f_{\rho}-1,\,\forall
\rho\,,
\end{align*}%
we can directly infer that
\[
Q_{m}^{\left(  d\right)  }\left(  \rho\right)  >f-2\Rightarrow f_\rho\geq f%
\]
In  fig.\ref{fig. f_dimensional_gme_entanglement} we show how
$Q_{m}^{\left(  d\right)  }$ detects the GME dimensionality.
The maximal violation of these inequalities is always achieved for
$m$-excitation Dicke states, i.e. $Q_{m}^{\left(  d\right)  }(|D_{m}^{d}
\rangle\langle D_{m}^{d}|)=d-1$.%
\begin{figure}
[ptb]
\begin{center}
\includegraphics[
natheight=7.499600in,
natwidth=9.999800in,
height=2.6541in,
width=3.5336in
]%
%{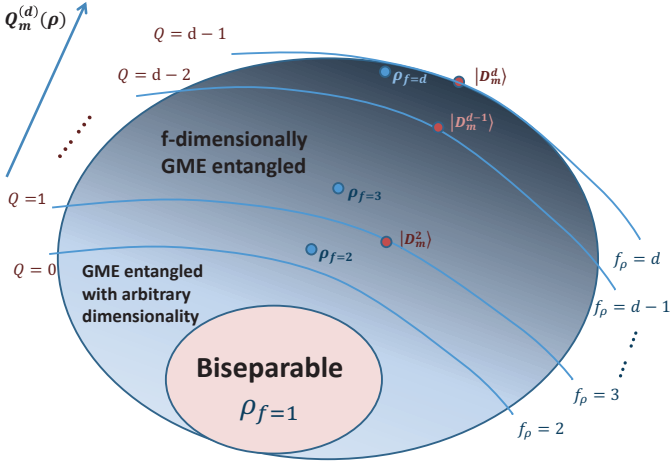}%
{f_dimensional_GME_entanglement}%
\caption{(Color online) The witness $Q_{m}^{\left(  d\right)  }$ can quantify the genuine
multipartite entanglement with the GME dimensionality $f_\rho$, where $f_\rho \le d$ . A state is
biseparable iff $f_\rho=1$ and GM entangled iff $f_\rho\geq2$. For Dicke states it holds
$Q_{m}^{\left(  d\right)  }\left(  \left\vert D_{m}^{f}\right\rangle
\left\langle D_{m}^{f}\right\vert \right)  =f-1$.}%
\label{fig. f_dimensional_gme_entanglement}%
\end{center}
\end{figure}
%EndExpansion

\bigskip

If we can find a proper $R$, as a result of
theorem \ref{theorem. lower bound of GME measure} that uses the Dicke state
coefficients, we can connect a lower bound of the measure $E_{m}$ with the GME
witness $Q_{m}^{\left(  d\right)  }\,\left(  \rho\right)  $. Indeed choosing
the ordered subset $R_{\sigma}$ of the set of coefficients $\sigma$ used in
(\ref{eq.: GME-witness formular}), i.e.%
\[
R_{\sigma}=\left\{  \left(  \alpha^{a},\beta^{b}\right)  \in\sigma:a\leq
b\right\}\,,
\]
we immediately arrive at a lower bound on $E_{m}$ as
\begin{equation}
E_{m}\left(  \rho\right)
\geq m\sqrt{\frac{1}{\left\vert R_{\sigma} \right\vert - N_{R_{\sigma}} }%
}Q_{m}^{\left(  d\right)  }\,\left(  \rho\right)
\geq m\sqrt{\frac{1}{\left\vert R_{\sigma} \right\vert }%
}Q_{m}^{\left(  d\right)  }\,\left(  \rho\right)
,\label{result. lower bound on GME measure}%
\end{equation}
where $\left\vert R_{\sigma}\right\vert =\frac{1}{2}\left(  d-1\right)
^{2}{\binom{n}{m}m}(n-m)$. In this case $N_{\eta}\leq m\left(  n-m-1\right)
+\Theta\left(  d-3\right)  \left(  n-m\right)  $, where $\Theta$ is a Heaviside step function.

\subsection{PPT-Witness and Our Witness}
\begin{figure}[ht!]
\begin{center}
\includegraphics[
natheight=7.158100in,
natwidth=8.496800in,
height=2.4967in,
width=2.9611in
]%
%{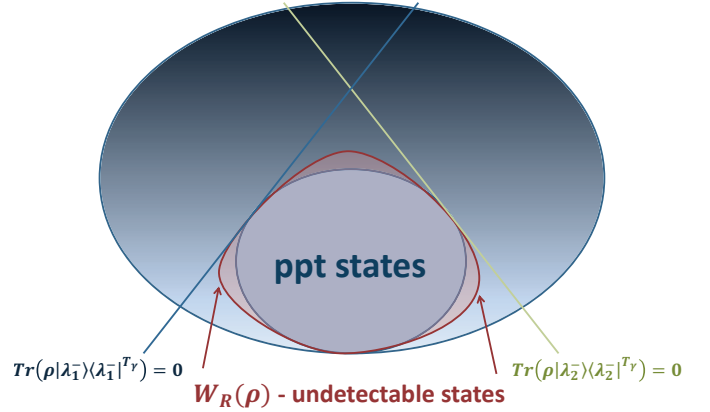}%
{ppt_witness_and_vs_witness}%
\caption{(Color online) PPT witness compared to our lower bounds (given in terms of the nonlinear witness $W_{R}\left(  \rho\right)  $): The set of $W_{R}\left(  \rho\right)  $
undetectable states denotes the set that is not detected by one specific $W_R(\rho)$ and is strictly larger than the set of PPT-states. However the set of states detected by $W_{R}\left(  \rho\right)  $ is strictly larger than the set detected by any standard PPT-witness.
}%
\label{fig. ppt_witness_and_vs_witness}%
\end{center}
\end{figure}

Using the result on entanglement across bipartitions from the previous section we can explore the relation of our lower bounds to other bipartite entanglement witnesses.
In our witness construction, the permutation operator $P_{\gamma}$ acting on a pure state is a $\gamma|\bar{\gamma}$-partial transpose operator, i.e. $P_\gamma|\psi\rangle\langle\psi|=(|\psi\rangle\langle\psi|)^{T_{\gamma|\bar{\gamma}}}$ (in the sense that our permutation operator now acts upon the index pairs of the coefficients of the pure state). It is thus intuitive to believe that there is certain connection between our witness and  a PPT-witness \cite{pptwitness}. Indeed our witnesses are related to a standard PPT-witness construction (where the witnesses separate the convex set of states that are positive under partial transpose (PPT) from its complement). E.g. for diagonal GHZ states we can use the standard PPT-witness construction which goes as follows. For
$\left\vert \text{GHZ}_{\eta_{1},\eta_{2}}\right\rangle :=\frac{1}{\sqrt{2}}\left(
\left\vert \eta_{1}\right\rangle +\left\vert \eta_{2}\right\rangle \right)  $
with $\eta_{1}+\eta_{2}=\left(  d-1,\cdots,d-1\right)  $, we can use the eigenvector belonging to the negative eigenvalue
of the $\gamma|\bar{\gamma}$-partial transposed $\left\vert \text{GHZ}_{\eta_{1}%
,\eta_{2}}\right\rangle \left\langle \text{GHZ}_{\eta_{1},\eta_{2}}\right\vert
^{T_{\gamma}}$ which we denote as $\left\vert \lambda_{\eta_{1},\eta_{2}}^{-}\right\rangle
=\frac{1}{\sqrt{2}}\left(  \left\vert \eta_{1}^{\gamma}\right\rangle
-\left\vert \eta_{2}^{\gamma}\right\rangle \right)  $. One can then construct
the PPT-witness and write its expectation value as
\begin{equation}
\Omega_{\text{ppt}}^{\gamma|\bar{\gamma}}\left(  \rho,\left\vert \lambda_{\eta
_{1},\eta_{2}}^{-}\right\rangle \right)  =\text{Tr}\left(  \left\vert \lambda
_{\eta_{1},\eta_{2}}^{-}\right\rangle \left\langle \lambda_{\eta_{1},\eta_{2}%
}^{-}\right\vert ^{T_{\gamma|\bar{\gamma}}}\rho\right)
,\label{eq. ppt-witness construction}%
\end{equation}
For instance in the three-qubit case, 
\begin{align}
\left\vert \lambda_{001,110}^{-}\right\rangle
\left\langle \lambda_{001,110}^{-}\right\vert ^{T_{1|23}}=\nonumber
\end{align}
\begin{align}
\frac{1}{2}\left(
\begin{array}
[c]{cccccccc}%
0 & 0 &  & \cdots & \cdots &  & 0 & 0\\
& 0 &  &  &  &  & -1 & \\
&  & 1 &  &  & 0 &  & \\
\vdots &  &  & 0 & 0 &  &  & \vdots\\
\vdots &  &  & 0 & 0 &  &  & \vdots\\
&  & 0 &  &  & 1 &  & \\
& -1 &  &  &  &  & 0 & \\
0 &  &  & \cdots & \cdots &  &  & 0
\end{array}
\right)  .
\end{align}
With the PPT-witness construction in eq.(\ref{eq. ppt-witness construction})
we end up with the following PPT-witness expectation value%
\begin{align}
\Omega_{\text{ppt}}^{\gamma|\bar{\gamma}}\left(  \rho,\left\vert \lambda_{GHZ}%
^{-}\right\rangle \right)  =\frac{1}{2}\left(  \rho_{\eta_{1}^{\gamma}\eta
_{1}^{\gamma}}+\rho_{\eta_{2}^{\gamma}\eta_{2}^{\gamma}}\right)
-\operatorname{Re}\left(  \rho_{\eta_{1}\eta_{2}}\right)  .
\end{align}
Under the fixed  bipartition $\gamma|\bar{\gamma}$, we construct our witness by
choosing $R=\left(  \eta_{1},\eta_{2}\right)  $ as%
\begin{equation}
-W_{\left(  \eta_{1},\eta_{2}\right)  }^{\gamma}\left(  \rho\right)
=\sqrt{\rho_{\eta_{1}^{\gamma}\eta_{1}^{\gamma}}\rho_{\eta_{2}^{\gamma}%
\eta_{2}^{\gamma}}}-\left\vert \rho_{\eta_{1}\eta_{2}}\right\vert\,.
\end{equation}
It is obvious that $-W_{\left(  \eta_{1},\eta_{2}\right)  }^{\gamma}\left(
\rho\right)  $ $\leq\Omega_{\text{ppt}}^{\gamma|\bar{\gamma}}\left(  \rho,\left\vert
\lambda_{GHZ}\right\rangle \right)  $. Hence we say that the witness
$W_{R}\left(  \rho\right)  $ is stronger than the PPT-witness $\Omega
_{\text{ppt}}^{\gamma|\bar{\gamma}}\left(  \rho,\left\vert \lambda_{\eta_{1},\eta
_{2}}^{-}\right\rangle \right)  $.

 The relation between our witness, the PPT-witness
and the PPT-convex set is illustrated in
fig.\ref{fig. ppt_witness_and_vs_witness}. For clearness we just draw two
PPT-witnesses in the figure.
For the $n$-qudit case there are $\frac{1}{2}d^{n}$ such eigenvectors $\left\vert
\lambda_{\eta_{1},\eta_{2}}^{-}\right\rangle $, corresponding to negative eigenvalues. Every witness $\Omega
_{\text{ppt}}^{\gamma|\bar{\gamma}}\left(  \rho,\left\vert \lambda_{\eta_{1},\eta
_{2}}^{-}\right\rangle \right)  $ is tangent to the set of PPT states (i.e. there exists one PPT state for which the witness yields zero). However also our witness $W_{R}\left(  \rho\right)$ is zero for all these PPT states, i.e.
our new witness detects more states than the traditional PPT-witness.
\section{Conclusions}
In conclusion we have presented a method to derive lower bounds on a measure of genuine multipartite entanglement. We show that in experimentally plausible scenarios (i.e. one knows which state one aims to produce) we can derive such lower bounds simply based on coefficients of the corresponding pure states. We also connected the lower bound construction to a framework of nonlinear entanglement witnesses developed in Refs.~\cite{Guehnewit,hmgh1,Dicke,hesgh1,shgh1}. These witnesses are experimentally feasible in terms of required local measurement settings. We provide further evidence in the bipartite case, where we also show that for certain families of mixed states our lower bounds are tight.\\
Some open questions remain, such as whether this general construction method will work for all kinds of states and how it can be generalized beyond just multi- and bipartite entanglement, but anything in between. We want to point out that recently also other authors have used a similar approach to bound this measure in the bipartite case \cite{hashemi} and for multipartite $W$ states \cite{severinima}.

{\em Acknowledgements:} We thank the QCI group in Bristol and Matteo Rossi for discussions.
JYW, HK and DB acknowledge financial support of DFG (Deutsche Forschungsgemeinschaft).
MH gratefully acknowledges support from the EC-project IP "Q-Essence" and the ERC Advanced
Grant "IRQUAT".

%%%%%%%%%%%%%%%%%%%%%%%%%%%%%%%% Appendix %%%%%%%%%%%%%%%%%%%%%%%%%%%%%%%%
\newpage\appendix*

\bigskip\clearpage

%\footnotesize

\section{Proofs\label{appx. proof of result}}

\subsection{The Formulas Used in The Main Article}

\subsubsection{Reduced linear entropy of pure states}\label{linear entropy of pure state}
Let $\left\vert \psi\right\rangle =\sum_{\eta\in\mathbb{N}_{d}^{\otimes n}}%
c_{\eta}\left\vert \eta\right\rangle $ be an n-qudit pure state. The linear
entropy of $\left\vert \psi\right\rangle $ can be written as%
\begin{equation}
S_{L}\left(  \rho_{\gamma}\right)  =\sum_{\eta_{1}\not =\eta
_{2},\in\mathbb{N}_{d^{n}}} %
\left\vert c_{\eta_{1}}c_{\eta_{2}}-c_{\eta_{1}^{\gamma}}c_{\eta_{2}^{\gamma
}}\right\vert ^{2},
\end{equation}
where $\left(  \eta_{1}^{\gamma},\eta_{2}^{\gamma}\right)  =P_{\gamma}\left(
\eta_{1},\eta_{2}\right)  $.

\begin{proof}
The linear entropy regarding a specific partition $\gamma|\bar{\gamma}$ is
defined as $S_{L}\left(  \rho_{\gamma}\right)  =2(1-tr\left(  \rho_{\gamma}%
^{2}\right))$, where $\rho_{\gamma}$ is the $\gamma$-reduced matrix of $\rho
$. The trace of $\rho_{\gamma}$ is $tr\left(  \rho_{\gamma}^{2}\right)
=\sum_{\alpha_{1},\alpha_{2}\in H_{\gamma}}\left(  \rho_{\gamma}\right)
_{\alpha_{1}\alpha_{2}}\left(  \rho_{\gamma}\right)  _{\alpha_{2}\alpha_{1}}$,
where $H_{\gamma}$ is the subspace of the reduction $\gamma$. We separate the summation
into diagonal and off-diagonal parts. For the diagonal part we use the
normalization condition to evaluate its value.%
\begin{align}
&  tr\left(  \rho_{\gamma}^{2}\right)  \nonumber\\
&  =\sum_{\alpha_{1}=\alpha_{2}}\left(  \rho_{\gamma}\right)  _{\alpha
_{1}\alpha_{1}}^{2}+\sum_{\alpha_{1}\not =\alpha_{2}}\left(  \rho_{\gamma
}\right)  _{\alpha_{1}\alpha_{2}}\left(  \rho_{\gamma}\right)  _{\alpha
_{2}\alpha_{1}}\nonumber\\
&  =\left(  \sum_{\alpha}(\rho_{\gamma})_{\alpha\alpha}\right)  ^{2}%
-\sum_{\alpha_{1}\not =\alpha_{2}}\left(  \rho_{\gamma}\right)  _{\alpha
_{1}\alpha_{1}}\left(  \rho_{\gamma}\right)  _{\alpha_{2}\alpha_{2}%
}\nonumber\\
&  +\sum_{\alpha_{1}\not =\alpha_{2}}\left(  \rho_{\gamma}\right)
_{\alpha_{1}\alpha_{2}}\left(  \rho_{\gamma}\right)  _{\alpha_{2}\alpha_{1}%
}\nonumber\\
&  =1-\sum_{\substack{\alpha_{1}\not =\alpha_{2}\in H_{\gamma}\\\beta
_{1},\beta_{2}\in H_{\bar{\gamma}}}}\left\vert c_{\alpha_{1}\otimes\beta_{1}%
}\right\vert ^{2}\left\vert c_{\alpha_{2}\otimes\beta_{2}}\right\vert
^{2}\nonumber\\
&  +\sum_{\substack{\alpha_{1}\not =\alpha_{2}\in H_{\gamma}\\\beta_{1}%
,\beta_{2}\in H_{\bar{\gamma}}}}c_{\alpha_{1}\otimes\beta_{1}}c_{\alpha
_{2}\otimes\beta_{1}}^{\ast}c_{\alpha_{2}\otimes\beta_{2}}c_{\alpha_{1}%
\otimes\beta_{2}}^{\ast}.
\end{align}
By exchanging the indices $\alpha_{1}$ and $\alpha_{2}$ one has
\begin{align}
tr\left(  \rho_{\gamma}^{2}\right)   &  =1-\frac{1}{2}\sum_{\substack{\alpha
_{1}\not =\alpha_{2}\in H_{\gamma}\\\beta_{1},\beta_{2}\in H_{\bar{\gamma}}%
}}\left\vert c_{\alpha_{1}\otimes\beta_{1}}c_{\alpha_{2}\otimes\beta_{2}%
}-c_{\alpha_{1}\otimes\beta_{2}}c_{\alpha_{2}\otimes\beta_{1}}\right\vert
^{2}\nonumber\\
&  =1-\frac{1}{2}\sum_{\eta_{1},\eta_{2}\in\mathbb{N}_{d^{n}}}\left\vert
c_{\eta_{1}}c_{\eta_{2}}-c_{\eta_{1}^{\gamma}}c_{\eta_{2}^{\gamma}}\right\vert
^{2},
\end{align}
where $\eta=\alpha\otimes\beta$ and $\left(  \eta_{1}^{\gamma},\eta
_{2}^{\gamma}\right)  =P_{\gamma}\left(  \eta_{1},\eta_{2}\right)  $. The
linear entropy is then calculated to%
\begin{equation}
S_{L}\left(  \rho_{\gamma}\right)  =\sum_{\eta_{1}\not =\eta_{2},\in
\mathbb{N}_{d^{n}}}\left\vert c_{\eta_{1}}c_{\eta_{2}}-c_{\eta_{1}^{\gamma}%
}c_{\eta_{2}^{\gamma}}\right\vert ^{2}.
\end{equation}

\end{proof}

\subsubsection{An Important Inequality}
\label{appx. ineq.sqrt}
The following is an inequality, which is crucial for derivation of the prefactor
$\sqrt{\frac{1}{|R|-N_R}}$ in the theorem \ref{theorem. lower bound of GME measure}:
\begin{equation}
\left\vert I\right\vert \sum_{i\in I}\left\vert a_{i}\right\vert ^{2}%
\geq\left\vert \sum_{i\in I}a_{i}\right\vert ^{2}. \label{eq. ineq. sqrt}%
\end{equation}

\begin{proof}
We prove this inequality by constructing two vectors as follows (using $|I|=n$)%
\begin{equation}
\vec{x}=\left(
\begin{array}
[c]{c}%
\left.
\begin{array}
[c]{c}%
a_{1}\\
\vdots\\
a_{1}%
\end{array}
\right\}  \text{n times}\\
\vdots\\
\left.
\begin{array}
[c]{c}%
a_{n}\\
\vdots\\
a_{n}%
\end{array}
\right\}  \text{n times}%
\end{array}
\right)  ,\vec{y}=\left(
\begin{array}
[c]{c}%
\begin{array}
[c]{c}%
a_{1}^{\ast}\\
\vdots\\
a_{n}^{\ast}%
\end{array}
\\
\vdots\\%
\begin{array}
[c]{c}%
a_{1}^{\ast}\\
\vdots\\
a_{n}^{\ast}%
\end{array}
\end{array}
\right)  .
\end{equation}
The right hand side of \ref{eq. ineq. sqrt} can be written as the scalar
product of $\vec{x}$ and $\vec{y}$.%
\begin{equation}
\left\vert \sum_{i\in I}a_{i}\right\vert ^{2}=\sum_{i,j\in I}a_{i}a_{j}^{\ast
}=\left\vert \vec{x}\cdot\vec{y}\right\vert .
\end{equation}
According to the Cauchy-Schwarz inequality, one can derive%
\begin{equation}
\left\vert I\right\vert \sum a_{i}^{2}=\left\vert \vec{x}\right\vert
\cdot\left\vert \vec{y}\right\vert \geq\left\vert \vec{x}\cdot\vec
{y}\right\vert =\left\vert \sum_{i\in I}a_{i}\right\vert ^{2} .%
\end{equation}

\end{proof}

\subsection{Proof of Theorem \ref{theorem. lower bound of GME measure} and
Approach of Construction of a GME Witness}
\label{appx. proof of lower bound on GME measure}

Firstly one can estimate the lower bound on $S_{L}\left(  \rho_{\gamma
}\right)  $ by summing its elements over a selected Region $R$, and dropping
the other non-negative summands (i.e. lower bounding them with $0$),%
\begin{align}
S_{L}\left(  \rho_{\gamma}^{i}\right)   &  \geq4\sum_{\left(  \eta_{1}%
,\eta_{2}\right)  \in R}\left\vert c_{\eta_{1}}^{i}c_{\eta_{2}}^{i}%
-c_{\eta_{1}^{\gamma}}^{i}c_{\eta_{2}^{\gamma}}^{i}\right\vert ^{2}%
\label{eq.: linear_entropy_lower_bound_1}\\
&  =4\sum_{\left(  \eta_{1},\eta_{2}\right)  \in R\backslash R^{\gamma}%
}\left\vert c_{\eta_{1}}^{i}c_{\eta_{2}}^{i}-c_{\eta_{1}^{\gamma}}^{i}%
c_{\eta_{2}^{\gamma}}^{i}\right\vert ^{2}.\nonumber
\end{align}
Here we add a prefactor $4$ in eq.(\ref{eq.: linear_entropy_lower_bound_1}),
since the symmetric factor of all $(\eta_{1},\eta_{2})$ equals $4$. That means
for every $(\eta_{1},\eta_{2})$ there are three other $\left(  \tilde{\eta
}_{1},\tilde{\eta}_{2}\right)  $ having the same value of $|c_{\tilde{\eta
}_{1}}c_{\tilde{\eta}_{2}}-c_{\tilde{\eta}_{1}^{\gamma}}c_{\tilde{\eta}%
_{2}^{\gamma}}|$ as $(\eta_{1},\eta_{2})$. Here we choose a non-degenerate
vector basis set $R$, and therefore need a prefactor $4$ in the lower bound.
The set $R^{\gamma}$ is the subset of $R$, whose elements do not contribute to
the linear entropy, i.e. $R^{\gamma}:=\left\{  \left(  \eta_{1},\eta
_{2}\right)  \in R:(\eta_{1}^{\gamma},\eta_{2}^{\gamma})=\left(  \eta_{1}%
,\eta_{2}\right)  \text{ or }\left(  \eta_{2},\eta_{1}\right)  \right\}  $.
Now we use the inequality (\ref{eq. ineq. sqrt}) to bound the square root of
$S_{L}\left(  \rho_{\gamma}^{i}\right)  $.%
\begin{align}
S_{L}\left(  \rho_{\gamma}^{i}\right)   &  \geq\frac{4}{\left\vert R\backslash
R^{\gamma}\right\vert }\left(  \sum_{\eta_{1},\eta_{2}\in R}\left\vert
c_{\eta_{1}}^{i}c_{\eta_{2}}^{i}-P_{\gamma}c_{\eta_{1}}^{i}c_{\eta}%
^{i}\right\vert \right)  ^{2}\label{eq. ineq._for_linear_entropy},\\
&  \Downarrow\nonumber\\
\sqrt{S_{L}\left(  \rho_{\gamma}^{i}\right)  } &  \geq2\sqrt{\frac
{1}{\left\vert R\right\vert -\left\vert R^{\gamma}\right\vert }}\sum_{\left(
\eta_{1},\eta_{2}\right)  \in R}\left\vert c_{\eta_{1}}^{i}c_{\eta_{2}}%
^{i}-c_{\eta_{1}^{\gamma}}^{i}c_{\eta_{2}^{\gamma}}^{i}\right\vert
.\label{result. lower bound on linear entropy}%
\end{align}
According to eq.(\ref{definitionroof}) together with
eq.(\ref{result. lower bound on linear entropy}), the lower bound reads {\footnotesize {
\begin{align}
&  E_{m}\geq\nonumber\\
&  2\inf_{\left\{  p_{i},\left\vert \psi_{i}\right\rangle \right\}  }\sum
_{i}p_{i}\left[  \sqrt{\frac{1}{\left\vert R\right\vert -\left\vert
R^{\gamma_{i}}\right\vert }}\sum_{\eta_{1},\eta_{2}\in R}\left(  \left\vert
c_{\eta_{1}}^{i}c_{\eta_{2}}^{i}\right\vert -\left\vert c_{\eta_{1}%
^{\gamma_{i}}}^{i}c_{\eta_{2}^{\gamma_{i}}}^{i}\right\vert \right)  \right],
\end{align}
}} where $\gamma_{i}$ is the partition in which the linear entropy
$S_{L}\left(  \left\vert \psi_{i}\right\rangle \left\langle \psi
_{i}\right\vert _{\gamma}\right)  $ of $\left\vert \psi_{i}\right\rangle
\left\langle \psi_{i}\right\vert $ has its minimum. By defining the
normalization factor $N_{R}:=\min_{\gamma}\left\vert R^{\gamma}\right\vert$ ,
which is the minimal value of $\left\vert R^{\gamma}\right\vert $ over all
possible bipartitions $\left\{  \gamma|\bar{\gamma}\right\}  $, we can extract
the prefactor from the convex roof summation.{\footnotesize {
\begin{align}
&  E_{m}\geq\nonumber\\
&  2\sqrt{\frac{1}{\left\vert R\right\vert -N_{R}}}\inf_{\left\{
p_{i},\left\vert \psi_{i}\right\rangle \right\}  }\sum_{i}p_{i}\left[
\sum_{\eta_{1},\eta_{2}\in R}\left(  \left\vert c_{\eta_{1}}^{i}c_{\eta_{2}%
}^{i}\right\vert -\left\vert c_{\eta_{1}^{\gamma_{i}}}^{i}c_{\eta_{2}%
^{\gamma_{i}}}^{i}\right\vert \right)  \right].
\label{eq. proof of general lower bound on GME measure}%
\end{align}
}}

The most difficult part of detecting entanglement of mixed states is a result of the
mixing of the decomposition coefficients $c_{\eta_{1}}^{i}c_{\eta_{2}}^{i}%
$. In the lab we have only the information about the mixed density matrix element
$\rho_{\eta_{1}\eta_{2}}$ but not $c_{\eta_{1}}^{i}c_{\eta_{2}}^{i}$,
therefore we must exchange the two summations in
eq.(\ref{eq. proof of general lower bound on GME measure}), and mix the
coefficients $c_{\eta_{1}}^{i}c_{\eta_{2}}^{i}$ into density matrix elements.
Therefore we estimate the summands with a bound, which is independent of the
specific partition $\gamma_{i}|\bar{\gamma}_{i}$, by adding a
summation of non-positive terms $\sum_{R^{\gamma_{i}}}\left[  \left\vert
c_{\eta_{1}}^{i}c_{\eta_{2}}^{i}\right\vert -\frac{1}{2}\left(  \left\vert
c_{\eta_{1}}^{i}\right\vert ^{2}+\left\vert c_{\eta_{2}}^{i}\right\vert
^{2}\right)  \right]  $ into the summands. {\footnotesize {
\begin{align}
&  \sum_{\left(  \eta_{1},\eta_{2}\right)  \in R\backslash R^{\gamma_{i}}%
}\left(  \left\vert c_{\eta_{1}}^{i}c_{\eta_{2}}^{i}\right\vert -\left\vert
c_{\eta_{1}^{\gamma_{i}}}^{i}c_{\eta_{2}^{\gamma_{i}}}^{i}\right\vert \right)
\nonumber\\
\geq & \sum_{\left(  \eta_{1},\eta_{2}\right)  \in R}\left(  \left\vert
c_{\eta_{1}}^{i}c_{\eta_{2}}^{i}\right\vert -\sum_{\gamma\in\Gamma\left(
\eta_{1},\eta_{2}\right)  }\left\vert c_{\eta_{1}^{\gamma}}^{i}c_{\eta
_{2}^{\gamma}}^{i}\right\vert \right) \nonumber\\
&-\frac{1}{2}\sum_{R^{\gamma_{i}}}
\left(  \left\vert c_{\eta_{1}}^{i}\right\vert ^{2}+\left\vert c_{\eta_{2}}^{i}\right\vert ^{2}\right)  \nonumber\\
\geq & \sum_{\left(  \eta_{1},\eta_{2}\right)  \in R}\left(  \left\vert
c_{\eta_{1}}^{i}c_{\eta_{2}}^{i}\right\vert -\sum_{\gamma\in\Gamma\left(
\eta_{1},\eta_{2}\right)  }\left\vert c_{\eta_{1}^{\gamma}}^{i}c_{\eta
_{2}^{\gamma}}^{i}\right\vert \right)  -\frac{1}{2}\sum_{\eta\in I\left(
R\right)  }n_{\eta}^{\gamma_{i}}\left\vert c_{\eta}^{i}\right\vert^{2},
\label{eq. proof. lower bound on GME measure}%
\end{align}
}} where $I\left(  R\right)  :=\left\{  \eta\in\mathbb{N}_{d}^{\otimes
n}:\exists\left(  \eta,\eta^{\prime}\right)  \text{ or }\left(  \eta^{\prime
},\eta\right)  \in R\right\}  $ is the set of indices contained in the set $R$,
$\Gamma\left(  \eta_{1},\eta_{2}\right)  =\left\{  \gamma|P\left(  \eta
_{1},\eta_{2}\right)  \not \in R\right\}  $ and $n_{\eta}^{\gamma_{i}}$ is the
number of vector pairs in $R^{\gamma_{i}}$ containing index $\eta$. In order
to eliminate the dependence of the partition $\gamma^{i}$, we define the maximal
value of $n_{\eta}^{\gamma}$ over all possible partitions $\left\{  \gamma
|\bar{\gamma}\right\}  $ as $N_{\eta}:=\max_{\gamma}n_{\eta}^{\gamma}$.
Then one can estimate the GME measure with
eq.(\ref{eq. proof of general lower bound on GME measure} and
\ref{eq. proof. lower bound on GME measure}) as{\footnotesize {%
\begin{align}
&  E_{m}\left(  \rho\right)  \nonumber\\
\geq &  2\sqrt{\frac{1}{\left\vert R\right\vert -N_{R}}}\sum_{\eta_{1}%
,\eta_{2}\in R}\left[  \inf_{\left\{  p_{i},\psi_{i}\right\}  }\sum_{i}%
p_{i}\left(  \left\vert c_{\eta_{1}}^{i}c_{\eta_{2}}^{i}\right\vert
-\sum_{\gamma\in\Gamma\left(  \eta_{1},\eta_{2}\right)  }\left\vert
c_{\eta_{1}^{\gamma}}^{i}c_{\eta_{2}^{\gamma}}^{i}\right\vert \right)
\right.  \nonumber\\
&  \left.  -\frac{1}{2}\sum_{\eta\in I\left(  R\right)  }N_{\eta}\left\vert
c_{\eta}^{i}\right\vert ^{2}\right].
\label{eq. proof of general lower bound on GME measure-1}%
\end{align}
}} Now one can safely exchange the summation in
eq.(\ref{eq. proof of general lower bound on GME measure-1}) and lower bound
it with the triangle inequality (i.e. $\sum_{p_{i}}p_{i}\left\vert c_{\eta_{1}%
}^{i}c_{\eta_{2}}^{i}\right\vert \geq\left\vert \rho_{\eta_{1}\eta_{2}%
}\right\vert $) and the Cauchy-Schwarz inequality (i.e. $\sum_{p_{i}}%
p_{i}\left\vert c_{\eta_{1}^{\gamma}}^{i}c_{\eta_{2}^{\gamma}}^{i}\right\vert
\leq\sqrt{\rho_{\eta_{1}^{\gamma}\eta_{1}^{\gamma}}\rho_{\eta_{2}^{\gamma}%
\eta_{2}^{\gamma}}}$). Finally we arrive at the result

{\footnotesize {
\begin{align}
&  E_{m}\left(  \rho\right)  \nonumber\\
\geq &  2\sqrt{\frac{1}{\left\vert R\right\vert -N_{R}}}\left[  \sum_{\eta
_{1},\eta_{2}\in R_{\symbol{126}}}\left(  \left\vert \rho_{\eta_{1}\eta_{2}%
}\right\vert -\sum_{\gamma\in\Gamma\left(  \eta_{1},\eta_{2}\right)  }%
\sqrt{\rho_{\eta_{1}^{\gamma}\eta_{1}^{\gamma}}\rho_{\eta_{2}^{\gamma}\eta
_{2}^{\gamma}}}\right)  \right.  \nonumber\\
&  \left.  -\frac{1}{2}\sum_{\eta\in I\left(  R\right)  }N_{\eta}%
\rho_{\eta\eta}\right]  ,\label{eq. proof of witness constructions}%
\end{align}
}} where $\rho_{\eta_{1}\eta_{2}}:=\langle\eta_{1}|\rho|\eta_{2}\rangle$.

Above is the proof of theorem \ref{theorem. lower bound of GME measure} in the case of $N_{R}:=\min_{\gamma}\left\vert R^{\gamma}\right\vert$. For the choice of $N_{R}:=\max_{\gamma}\left\vert R^{\gamma}\right\vert$, one just needs to calculate $\max_{\gamma}\left\vert R^{\gamma}\right\vert$ at the first step, i.e. eq.(\ref{eq.: linear_entropy_lower_bound_1}), then pick up $|R|-\max_{\gamma}\left\vert R^{\gamma}\right\vert$ elements from $R\backslash R^{\gamma}$ as summation region in the second line and then repeat the whole
proof above. At the end we will attain the same expression for the lower bound on
$E_{m}$ as eq.(\ref{eq. proof of witness constructions}), but with different $N_\eta$ from the ones before $N_{R}=\min_{\gamma}\left\vert R^{\gamma}\right\vert$. $N_\eta$ in this maximum choice is greater or equal to the one derived in the minimal-case. In the four-qubit singlet example in sec.\ref{sec. four-qubit-example}, the value of $N_\eta$ is exactly the same for both choices. Therefore we choose the maximum, i.e. $N_R=2$, to get a tighter lower bound on $E_m$.

\subsection{Explicit decomposition of the bipartite witness into local observables}\label{GellMann}
The measurements needed to ascertain the relevant density matrix elements in the bipartite scenario can be performed in a basis consisting of a tensor product of the generalized Gell-Mann matrices.	We continue to provide for each of the density matrix elements above their respective coefficients. 	The density matrix elements are either off diagonal elements or diagonal elements.
	The off-diagonal elements can be obtained by expectation values of the symmetric and antisymmetric generalized Gell-Mann matrices:
	\begin{align}
	 \Lambda_{s}^{12} & = & \left(\begin{smallmatrix} 0&1&0\\ 1&0&0  \\ 0&0&0  \end{smallmatrix}\right),&
	 \Lambda_{s}^{13} & = & \left(\begin{smallmatrix} 0&0&1\\ 0&0&0  \\ 1&0&0  \end{smallmatrix}\right),&
	 \Lambda_{s}^{23} & = & \left(\begin{smallmatrix} 0&0&0\\ 0&0&1  \\ 0&1&0  \end{smallmatrix}\right),&\\
	 \Lambda_{a}^{12} & = & \left(\begin{smallmatrix} 0&-i&0\\ i&0&0 \\ 0&0&0  \end{smallmatrix}\right),&
	 \Lambda_{a}^{13} & = & \left(\begin{smallmatrix} 0&0&-i\\ 0&0&0  \\ i&0&0  \end{smallmatrix}\right),&
	 \Lambda_{a}^{23} & = & \left(\begin{smallmatrix} 0&0&0\\ 0&0&-i \\ 0&-i&0  \end{smallmatrix}\right).&
	\end{align}

	They can be written as follows:

	\begin{align}
		\Re e\left[\left\langle 00 |\rho |11 \right\rangle\right] & =
		\frac{1}{2}\left\langle\Lambda_{s}^{12} \otimes \Lambda_{s}^{12}-\Lambda_{a}^{12} \otimes \Lambda_{a}^{12}\right\rangle ,	\\			
		\Re e\left[\left\langle 00 |\rho| 22 \right\rangle\right] & =
		\frac{1}{2}\left\langle\Lambda_{s}^{13} \otimes \Lambda_{s}^{13}-\Lambda_{a}^{13} \otimes \Lambda_{a}^{13}\right\rangle ,	\\	
		\Re e\left[\left\langle 11 |\rho| 22 \right\rangle\right] & =
		\frac{1}{2}\left\langle\Lambda_{s}^{23} \otimes \Lambda_{s}^{23}-\Lambda_{a}^{23} \otimes \Lambda_{a}^{23}\right\rangle .	
	\end{align}

	We now consider the terms obtained via the diagonal generalized Gell-Mann matrices.
	$\Lambda_{d}^{0} = \left(\begin{smallmatrix} 1&0&0\\ 0&1&0  \\ 0&0&1  \end{smallmatrix}\right)$,
	$\Lambda_{d}^{1} = \left(\begin{smallmatrix} 1&0&0\\ 0&-1&0 \\ 0&0&0  \end{smallmatrix}\right)$,
	$\Lambda_{d}^{2} = \frac{1}{\sqrt{3}}\left(\begin{smallmatrix} 1&0&0\\ 0&1&0  \\ 0&0&-2 \end{smallmatrix}\right)$.
	We will expand the soughtafter terms into coefficients, utilizing the following basis:
	\begin{align}
	b=\left(\begin{matrix}
	\Lambda_{d}^{0} \otimes \Lambda_{d}^{0} \\
	\Lambda_{d}^{0} \otimes \Lambda_{d}^{1} \\
	\Lambda_{d}^{0} \otimes \Lambda_{d}^{2} \\
	\Lambda_{d}^{1} \otimes \Lambda_{d}^{0} \\
	\Lambda_{d}^{1} \otimes \Lambda_{d}^{1} \\
	\Lambda_{d}^{1} \otimes \Lambda_{d}^{2} \\
	\Lambda_{d}^{2} \otimes \Lambda_{d}^{0} \\
	\Lambda_{d}^{2} \otimes \Lambda_{d}^{1} \\
	\Lambda_{d}^{2} \otimes \Lambda_{d}^{2} \\
	\end{matrix}\right).
	\end{align}
	
	For further reference the coefficients are given as:
	%im komentar steht immer die original ausgabe aus mathematica		
		\begin{align*}
			\left\langle 01| \rho| 01 \right\rangle & = \left\langle b*\left(
			\frac{1}{9},\frac{1}{6},\frac{1}{6\sqrt{3}},
			-\frac{1}{6},-\frac{1}{4},-\frac{1}{8\sqrt{3}},\frac{1}{6\sqrt{3}},\frac{1}{4\sqrt{3}},\frac{1}{12}
			\right)\right\rangle,\\			
			%{{c1 -> 1/9, c2 -> 1/6, c3 -> 1/(6 Sqrt[3]), c4 -> -(1/6),
 			% c5 -> -(1/4), c6 -> -(1/(8 Sqrt[3])), c7 -> 1/(6 Sqrt[3]),
  		% c8 -> 1/(4 Sqrt[3]), c9 -> 1/12}}
			\left\langle 10 |\rho |10 \right\rangle & = \left\langle b*\left(
			 \frac{1}{9},-\frac{1}{6},\frac{1}{6\sqrt{3}},\frac{1}{6},-\frac{1}{4},\frac{1}{8\sqrt{3}},\frac{1}{6\sqrt{3}},-\frac{1}{4\sqrt{3}},\frac{1}{12}
		 	\right)\right\rangle,\\
			%{{c1 -> 1/9, c2 -> -(1/6), c3 -> 1/(6 Sqrt[3]), c4 -> 1/6,
  		%c5 -> -(1/4), c6 -> 1/(8 Sqrt[3]), c7 -> 1/(6 Sqrt[3]),
  		%c8 -> -(1/(4 Sqrt[3])), c9 -> 1/12}}
			\left\langle 02 |\rho |02 \right\rangle & = \left\langle b*\left(
			\frac{1}{9},\frac{1}{6},\frac{1}{6\sqrt{3}},0,0,0,-\frac{1}{3\sqrt{3}},-\frac{1}{2\sqrt{3}},-\frac{1}{6}
			\right)\right\rangle,\\
			%{{c1 -> 1/9, c2 -> 1/6, c3 -> 1/(6 Sqrt[3]), c4 -> 0, c5 -> 0,
  		%c6 -> 0, c7 -> -(1/(3 Sqrt[3])), c8 -> -(1/(2 Sqrt[3])),
  		%c9 -> -(1/6)}}
			\left\langle 20| \rho| 20 \right\rangle & =
			\left\langle b*\left(
			\frac{1}{9},0,-\frac{1}{3\sqrt{3}},\frac{1}{6},0,-\frac{1}{4\sqrt{3}},\frac{1}{6\sqrt{3}},0,-\frac{1}{6}
			\right)\right\rangle,\\
			%{{c1 -> 1/9, c2 -> 0, c3 -> -(1/(3 Sqrt[3])), c4 -> 1/6, c5 -> 0,
			%  c6 -> -(1/(4 Sqrt[3])), c7 -> 1/(6 Sqrt[3]), c8 -> 0, c9 -> -(1/6)}}			
			\left\langle 12 |\rho |12 \right\rangle & = \left\langle b*\left(
			\frac{1}{9},-\frac{1}{6},\frac{1}{6\sqrt{3}},0,0,0,-\frac{1}{3\sqrt{3}},\frac{1}{2\sqrt{3}},-\frac{1}{6}
			\right)\right\rangle,\\
			%{{c1 -> 1/9, c2 -> -(1/6), c3 -> 1/(6 Sqrt[3]), c4 -> 0, c5 -> 0,
  		%c6 -> 0, c7 -> -(1/(3 Sqrt[3])), c8 -> 1/(2 Sqrt[3]), c9 -> -(1/6)}}
			\left\langle 21| \rho |21 \right\rangle & = \left\langle b*\left(
			\frac{1}{9},0,-\frac{3}{3\sqrt{3}},-\frac{1}{6},0,\frac{1}{4\sqrt{3}},\frac{1}{6\sqrt{3}},0,-\frac{1}{6}			 
			%{{c1 -> 1/9, c2 -> 0, c3 -> -(1/(3 Sqrt[3])), c4 -> -(1/6), c5 -> 0,
 			% c6 -> 1/(4 Sqrt[3]), c7 -> 1/(6 Sqrt[3]), c8 -> 0, c9 -> -(1/6)}}			
			\right)\right\rangle.
		\end{align*}

\subsection{Explicit form of the GME witness $Q_m^{(d)}$}\label{DickeWitness}
Here we recall the explicit form of the nonlinear witness from Ref.~\cite{shgh1}. Using the notation for Dicke states introduced in section \ref{Dicke} we arrive at the following lower bound
\begin{widetext}
\begin{equation}
Q_{m}^{\left(  d\right)  }=\frac{1}{m}\left[  \sum_{l,l^{\prime}=0}^{d-2}%
\sum_{\sigma}\left(  \left\vert \left\langle \alpha^{l}\left\vert
\rho\right\vert \beta^{l^{\prime}}\right\rangle \right\vert -\sum_{\delta
\in\Delta}\sqrt{\left\langle \alpha^{l}\right\vert \otimes\left\langle
\beta^{l^{\prime}}\right\vert P_{\delta}^{\dagger}\rho^{\otimes2}P_{\delta
}\left\vert \alpha^{l}\right\rangle \otimes\left\vert \beta^{l^{\prime}%
}\right\rangle }\right)  -N_{D}\sum_{l=0}^{d-2}\sum_{\alpha}\left\langle
\alpha^{l}\left\vert \rho\right\vert \alpha^{l}\right\rangle \right],
\label{eq.: GME-witness formular}%
\end{equation}
with%
\begin{align}
m &  \in\left\{  1,\cdots,\left\lfloor n/2\right\rfloor \right\}
,N_{D}=\left(  d-1\right)  m\left(  n-m-1\right)  \nonumber,\\
\sigma &  :=\left\{  \left(  \alpha,\beta\right)  :\left\vert \alpha\cap
\beta\right\vert =m-1\right\}  \nonumber,\\
\Delta &  :=\left\{
\begin{array}
[c]{cc}%
\alpha & ,l'=l\\
\left\{  \delta|\delta\subset\overline{\alpha\backslash\beta}\right\}   &
,l'<l\\
\left\{  \delta|\delta\subset\overline{\beta\backslash\alpha}\right\}   & ,l'>l
\end{array}
\right.  .\label{def. GME witness - permutation set}%
\end{align}
\end{widetext}
The properties of this witness are discussed in the main text.


\begin{thebibliography}{99}                                                                                               %


\bibitem {horodeckiqe}R. Horodecki, P. Horodecki, M. Horodecki and K.
Horodecki, Rev. Mod. Phys. \textbf{81} No. 2 865, (2009).

\bibitem {qc}R. Raussendorf and H.-J. Briegel, Phys. Rev. Lett. \textbf{86},
5188 (2001).

\bibitem {qa}D. Bru{\ss } and C. Macchiavello, Phys. Rev. A \textbf{83},
052313 (2011).

\bibitem {SHH3}S. Schauer, M. Huber and B. C. Hiesmayr, Phys. Rev. A,
\textbf{82}, 062311 (2010) .

\bibitem {cond}D. Akoury \textit{et al.}, Science \textbf{318}, 949 (2007).
\bibitem{spin} D. Bru\ss , N. Datta, A. Ekert, L.C. Kwek, C. Macchiavello, Phys. Rev. A {\bf 72}, 014301 (2005).
\bibitem {ground}S. M. Giampaolo, G. Gualdi, A. Monras and F. Illuminati,
Phys. Rev. Lett. \textbf{107}, 260602 (2011).

\bibitem {bird}E. Gauger, E. Rieper, J. J. L. Morton, S. C. Benjamin and V.
Vedral, Phys. Rev. Lett. \textbf{106}, 040503 (2011).

\bibitem {acin}A. Acin, D. Bru{\ss }, M. Lewenstein and A. Sanpera, Phys. Rev.
Lett. \textbf{87}, 040401 (2001).

\bibitem {gurvits}L. Gurvits, Proc. of the 35th ACM symp. on Theory of comp.,
pages 10-19, New York, ACM Press. (2003).

\bibitem {guehnetoth}O. G\"uhne and G. Toth, Phys. Rep. \textbf{474}, 1 (2009).
\bibitem{horodeckinonlinear} P. Horodecki, Phys. Rev. A {\bf 68}, 052101 (2003).
\bibitem{nonlin} O. G\"uhne, Norbert L\"utkenhaus, Phys. Rev. Lett. {\bf 96}, 170502 (2006).
\bibitem {Guehnetaming}B. Jungnitsch, T. Moroder, and O. G\"uhne, Phys. Rev.
Lett. \textbf{106}, 190502 (2011).

\bibitem {maetal}Z.-H. Ma, Z.-H. Chen, J.-L. Chen, Ch. Spengler, A. Gabriel
and M. Huber, Phys. Rev. A \textbf{83}, 062325 (2011).

\bibitem {Milburn}D. T. Pope and G. J. Milburn, Phys. Rev. A \textbf{67},
052107 (2003).

\bibitem {Love}P. Love, \textit{et. al.}, Quant. Inf. Proc. \textbf{6}, No. 3,
187 (2007).

\bibitem {HH2}B.C. Hiesmayr and M. Huber, Phys. Rev. A \textbf{78}, 012342 (2008).

\bibitem {HHK1}B.C. Hiesmayr, M. Huber and Ph. Krammer, Phys. Rev. A
\textbf{79}, 062308 (2009).

\bibitem {Guehnewit}O. G\"uhne and M. Seevinck, New J. Phys. \textbf{12},
053002 (2010).

\bibitem {hmgh1}M. Huber, F. Mintert, A. Gabriel and B.C. Hiesmayr, Phys. Rev.
Lett. \textbf{104}, 210501 (2010).

\bibitem {Dicke}R. H. Dicke, Phys. Rev. \textbf{93}, 99 (1954).

\bibitem {hesgh1}M. Huber, P. Erker, H. Schimpf, A. Gabriel and B.C. Hiesmayr,
Phys. Rev. A \textbf{83}, 040301(R) (2011).

\bibitem {shgh1}Ch. Spengler, M. Huber, A. Gabriel and B.C. Hiesmayr, Quant.
Inf. Proc., DOI: 10.1007/s11128-012-0369-8 (2012).

\bibitem{schmidtnumber} B. Terhal and P. Horodecki,
Phys. Rev. A \textbf{61}, 040301 (2000).

\bibitem{pptwitness} M. Lewenstein. B. Kraus, J.I. Cirac, and P. Horodecki,
Phys. Rev. A \textbf{62}, 052310 (2000).

\bibitem{cabello}
A. Cabello, Phys. Rev. A {\bf 68}, 012304 (2003).
\bibitem{hashemi}
S. M. Hashemi Rafsanjani and S. Agarwal, arxiv: 1204.3912 (2012).
\bibitem{caves} P. Rungta and C. M. Caves, Phys. Rev. A {\bf 67}, 012307
(2003).

\bibitem {severinima} Z.-H. Ma, Z.-H. Chen, J.-L. Chen and S. Severini, Phys. Rev. A {\bf 85}, 062320 (2012).
\end{thebibliography}
\end{document}